\newtheorem{theorem}{Theorem}
\newtheorem{lemma}{Lemma}
\newtheorem{corollary}{Corollary}
\begin{document}

\title{Some C*-algebras associated to quantum gauge theories}

\author{Keith Hannabuss}
\address{Balliol College, Oxford OX1 3BJ, England.}

\keywords{Quantum electrodynamics, gauge theories,
noncommutative geometry, Rieffel inducing, monoidal dagger
categories, bar categories, braiding.\\ {\it 2010 Mathematics
Subject Classification} 81T13,(81T75,46L08, 18D10)}
\date{29 July 2010 revised 16th October 2010}

\maketitle

\begin{abstract}
Algebras associated with Quantum Electrodynamics and other
gauge theories share some mathematical features with T-duality
Exploiting this different perspective and some category theory,
the full algebra of fermions and bosons can be regarded as a
braided Clifford algebra over a braided commutative boson
algebra, sharing much of the structure of ordinary Clifford
algebras.
\end{abstract}

\begin{center}
\it Dedicated to Alan Carey, on the occasion of his 60$\,^{th}$
birthday
\end{center}


\section*{Introduction}

It is just over 60 years since Quantum Electrodynamics achieved
its modern form, \cite{Sch,SSS}.  Some insights into its
ultraviolet and infrared divergences have been provided by
causal distribution splitting \cite{EG,Sc,FRS}, and the Hopf
algebra structure of nested Feynman diagrams \cite{K,CK},
respectively, both of which fit naturally within noncommutative
geometry \cite[Chapters 12-13]{GBVF}. With that in mind, we
present a slightly different perspective on the operator
algebras of gauge theories, which emphasizes noncommutative
geometric features, and also parallels some techniques which
appear in string theory and $T$-duality \cite{MR,BHM}.

For abelian gauge theories it is possible to give an explicit
derivation of their algebraic structures in terms of inducing
and crossed products. This parallels other examples in
noncommutative geometry, and the inducing process gives rise to
the Poisson--Gauss law relating the gauge field and fermion
charge density. These ideas are sketched in the next four
sections.

The remainder of the paper elucidates some features of the
algebraic construction more generally, using its functorial
properties. This reveals that the fermionic part of the theory
can be considered as a braided Clifford algebra over the gauge
bosonic algebra.


\section{Fermions}

The conventional operator approach to Quantum Electrodynamics
(QED) combines a fermionic anticommutation relation algebra
(CAR) and a bosonic commutation relation algebra (CCR). We
shall start with the fermions and introduce the bosons via a
gauge principle. For convenience, we work a Hamiltonian rather
than a Lagrangian approach, (so that we work over space $X =
{\Bbb R}^3$ rather than over space-time) and in a radiation
gauge.

The real structure of the anticommutation relations, encoded in
a complexified Clifford algebra Cliff$(W,Q)$ of a real
pre-Hilbert space $W$ with inner product $Q$, turns out to be
more fundamental. The Clifford algebra is the unital algebra
generated by elements $\Psi(\xi)$, so that for any
cross-sections $\xi, \eta\in W$, the (equal time)
anticommutator satisfies
$$
[\Psi(\xi),\Psi(\eta)]_+ := \Psi(\xi)\Psi(\eta)+ \Psi(\eta)\Psi(\xi) = 2Q(\xi,\eta)1.
$$
 This algebra has a natural antilinear antimorphism or
*-structure, and a normalised trace $\tau:
{\textrm{Cliff}}(W,Q) \to {\Bbb C}$, \cite{PR}. In QED one can
take $W$ to be the smooth, fast decreasing (more precisely,
Schwartz) cross-sections of a Dirac spinor bundle over ${\Bbb
R}^3$, considered as a real vector space, and $Q(\xi,\eta)$ the
integral over ${\Bbb R}^3$ of the real part of the spinor inner
product $\langle{\xi({\bf x})},{\eta({\bf x})}\rangle$.

For any complex structure $J$ on $W$, we can form the complex
*-algebra generated by elements $\Psi_J(\xi) =
\frac12\left(\Psi(\xi) - i\Psi(J\xi)\right)$, which satisfy the
anticommutation relations
$$
[\Psi_J(\xi)^*,\Psi_J(\eta)]_+ = \langle{\xi},{\eta}\rangle
 = \int_{{\Bbb R}^3} \langle{\xi({\bf x})},{\eta({\bf x})}\rangle\,d^3{\bf x},
$$
(with $\Psi_J$ a creation operator). The procedure can be
reversed by taking self-adjoint operators $\Psi(\xi) =
\Psi_J(\xi) + \Psi_J(\xi)^*$ which satisfy the original
Clifford algebra relations. In the CAR $\xi$ and $\eta$ may be
regarded as elements of the complexification $W_{{\Bbb C}} =
W\otimes{\Bbb C}$. This is the direct sum of two complex
subspaces $W_J^\pm = \ker(1 \pm iJ)$, each of which is easily
checked to be isotropic with respect to the complex bilinear
extension of $Q$. Formally, $\Psi_J(\xi)
=\frac12\left(\Psi(\xi) - i\Psi(J\xi)\right) \sim
\Psi\left(\frac12(1 - iJ)\xi)\right)$ so that we may as well
take  $\xi \in W^+_J$. Similarly in $\Psi_J(\eta)$ we can take
$\eta \in W^-_J$.

We can now define the Fock representation of the CAR associated
with $J$ which takes place on the Hilbert space completion of
the exterior algebra $\bigwedge W_J^+$ equipped with the inner
product derived by extending $Q$. The creation operator
$\Psi_J(\xi)$ acts as exterior multiplication by $\xi \in
W^+_J$. Its adjoint turns out to be an inner multiplication and
the CAR are easily verified, \cite{PR}. This Fock
representation has an obvious cyclic vector $\Phi_J = 1 \in
{\Bbb C} = \bigwedge^0W_J^+$, called the Fock vacuum, which is
annihilated by all the $\Psi_J(\xi)^*$. This provides a
correspondence between complex structures $J$ and vacuum states
$\Phi_J$. Different physical states correspond to different
complex structures and different complex structures usually
give inequivalent Fock representations.


\section{The gauge bosons}

Quantum Electrodynamics was soon followed by nonabelian gauge
theories, and now gauge symmetries are regarded as fundamental.
This insight provides an alternative way to the traditional
introduction of gauge bosons simply by adding new generators
and commutation relations. Let $G$ be the (global) gauge group
(U(1) for QED, and generally a compact connected  Lie group),
which we suppose to have a unitary representation on spinors.
The natural pointwise action of the local gauge group of smooth
maps tending to 1 at $\infty$, ${\mathcal G} =
{\textrm{Map}}_0({\Bbb R}^3,G)$ on the space $W$ of sections of
the spin bundle preserves the inner product $Q$, and, for
$\chi\in {\mathcal G}$ we may define an automorphism
$\alpha_\chi$ of Cliff$(W,Q)$ by $\alpha_\chi[\Psi(\xi)] =
\Psi(\chi.\xi)$. (In QED one has $G=U(1)$ which certainly has a
unitary representation on the complex spinors.) When $G = U(1)$
the abelian gauge group, ${\mathcal G}$, is amenable;
 Carey and Grundling showed that for $G=U(n)$ or $SU(n)$, and also
for smooth subalgebras on compactified ${\Bbb R}^3$  there is
still a topology with respect to which  ${\mathcal G}$ is
amenable, \cite{ALC5}.

This action of the gauge group is not compatible with the Dirac
equation governing the dynamics of the fermions, so we
introduce a connection $\nabla$ on the spinor bundle, which
compensates for the fermionic gauge transformation by changing
to $\chi^{-1}\nabla\chi$ under the gauge group action. In
practice it is easier to replace $\nabla$ by its Lie algebra
$\frak{g}$-valued connection 1-form $\omega \in \Omega =
\Omega^1(X,\frak{g})$, and then the gauge group action is given
by $\omega \mapsto \chi^{-1}\omega\chi +i\chi^{-1}d\chi$. For
an abelian group $G$ such as $U(1)$ this simplifies to $\omega
\mapsto \omega +i\chi^{-1}d\chi$.

The fermion dynamics depend on $\omega$, and for each
connection $\omega$ there is a fermion space $W_\omega$, giving
a  pre-Hilbert bundle $W$ over the space of connections, and an
associated Clifford bundle with fibres ${\textrm{
Cliff}}(W_\omega,Q)$. So fermions should now be regarded as
cross-sections of this bundle. The gauge principle requires
that the change of connection agrees with the automorphism
$\alpha_\chi$. That is for a Clifford algebra valued section
$\Xi$ of the bundle over connections one has
$$
\Xi(\omega+i\chi^{-1}d\chi) = \alpha_\chi[\Xi(\omega)],
$$
precisely the condition defining an induced algebra:
$$
 {\textrm{ind}}_{\mathcal G}^\Omega({\textrm{Cliff}},\alpha) =
\{\Xi\in {\textrm{Cliff}} : \Xi(\omega+i\chi^{-1}d\chi) = \alpha_\chi[\Xi(\omega)]\}.
$$
Similar algebras appear in string theory, where the
infinite-dimensional vector group $\Omega$ is replaced by a
locally compact vector group, and the gauge group ${\mathcal
G}$ by a maximal rank lattice subgroup, so that the quotient
$\Omega/{\mathcal G}$ is a torus, and more generally one
studies principal torus bundles, represented by continuous
trace algebras with non-trivial Dixmier--Douady class (or
$H$-flux), \cite{MR,BHM} Non-trivial classes seem to be
unnecessary for our abelian gauge theories, though we can still
associate the algebra with a $\Omega/{\mathcal G}$-bundle,
\cite {ALC3,ALC4}.

The induced algebra carries a Mackey action of functions on
$\Omega/{\mathcal G}$: any function $F$ on $\Omega/{\mathcal
G}$ lifts to $\widetilde{F}$ on $\Omega$ which has a
multiplication action on the induced algebra,
$$
(F\cdot\Xi)(\omega) = \widetilde{F}(\omega)\Xi(\omega).
$$

In our Hamiltonian picture the connection form $\omega$
represents the magnetic vector potential ${\bf A}$, and we can
identify the multiplication action of $F$ with the action of
$\widetilde{F}(e{\bf A}/\hbar)$, where ${\bf A}$ is the
quantised magnetic potential. By construction this depends only
on the gauge equivalence class of ${\bf A}$.

Since $\Omega$ is not locally compact it is not obvious which
topology or algebra of functions to use. Following the usual
conventions of algebraic quantum field theory, we start with
the C$^*$-algebra generated by elements $\phi_a$ of a dual
group, more precisely, these can be obtained from elements $a$
of a dual vector space $\widehat{\Omega}$ of continuous linear
functionals  $\Omega$, so that
$$
\phi_a(\omega) = e^{ia(\omega)}.
$$
We thus take finite linear combinations of these functions with
pointwise multiplication.  For the sections of the bundle over
the connection space we can similarly use linear combinations
of products of spinors with functions $\phi_a$.

An infinite-dimensional  vector group $\Omega$ lacks a
canonical Pontryagin dual. Many spaces are in duality to
$\Omega$, (for locally convex spaces the Mackey--Arens Theorem
characterises the dual pairs.) Besides an algebraic dual, one
might take the geometric holonomy dual of the Rovelli--Smolin
transform, \cite[Ch.14]{A}, which takes dual elements labelled
by loops $\gamma$ in ${\Bbb R}^3$ paired with potentials by the
holonomy of the loop
$$
\gamma: \omega \to \int_\gamma \omega.
$$
Whichever dual one uses, the functions lifted from
$\Omega/{\mathcal G}$ are given by $a \in {\mathcal G}^\perp$,
that is dual elements which vanish on gauge-trivial
connections. This shows explicitly that a generalised fixed
point algebra exists. (Ashtekar and Lewandowski have shown that
the spectrum of the associated unital C$^*$-algebra is a
compactification of $\Omega/{\mathcal G}$, \cite{AL}. This
could potentially provide another algebra of continuous
functions.)


\section{Electric fields}

Dynamically, magnetic fields oscillate into electric fields, so
the above description is still incomplete. In string theory one
forms the dual crossed product algebra coming from a natural
action of $\Omega$, and then $T$-duality then turns out to be
Takai duality, \cite{T}. We do not need the duality here but
for abelian $G$ we can similarly form a crossed product.

There is an $\Omega$-action on the induced algebra given by
$$
\tau_u[\Xi](\omega) = \Xi(\omega+u),
$$
which one may check to be consistent with the gauge condition.
This action allows us to form the crossed product algebra
$$
{\mathcal A} = {\textrm{ind}}_{{\mathcal G}}^\Omega({\textrm{Cliff}}(W,Q),\alpha)\rtimes_\tau \Omega.
$$
We note that, by definition, $ \alpha_\chi
=\tau_{i\chi^{-1}d\chi}$.

The crossed product is effectively generated by the original
algebra and the group, considered as point measures $\delta_u$
concentrated at $u\in \Omega$, with the covariance property
that $\tau_u$ is implemented by the adjoint (conjugation)
action of $\delta_u$. So overall we take the *-algebra
generated by $\phi_{a,u} = \phi_a\delta_u$ with product
$$
\phi_{a,u}*\phi_{b,v} = (\phi_a\delta_u)*(\phi_b\delta_v) = \phi_a(e^{ib(u)}\phi_b)\delta_u*\delta_v
= e^{ib(u)}\phi_{a+b}\delta_{u+v}.
$$
This is clearly noncommutative since
$$
(\phi_{a,u}*\phi_{b,v}) = e^{ib(u)}\phi_{a+b,u+v} = e^{i[b(u)-a(v)]}\phi_{b,v}*\phi_{a,u}.
$$
The $*$ operation is
$$
\phi_{a,u}^* = e^{ia(u)}\phi_{-a,-u}
$$
so that the generators are unitary:
$$
(\phi_{a,u}^**\phi_{a,u})(\omega,\epsilon) = e^{ia(u)}\phi_{-a,-u}*\phi_{a,u}
= e^{ia(u)}e^{-ia(u)}\phi_{0,0} = 1.
$$
In general, the crossed product consists of complex valued \lq
functions\rq\ on $\Omega\times\Omega$ with product and star
$$
(f*g)(\omega,\epsilon) = \int f(\omega,\epsilon_1)g(\omega+\epsilon_1,\epsilon-\epsilon_1)\,d\epsilon_1,
\qquad
f^*(\omega,\epsilon) = \overline{f(\omega-\epsilon,-\epsilon)}.
$$

The ${\mathcal G}$-fixed algebra ${\mathcal B}$ can be
considered as the algebra generated by $\phi_{a,u}$ with $a\in
{\mathcal G}^\perp$, the subgroup of dual elements which map
${\mathcal G}$ to 1. (This demonstrates the existence of the
generalised fixed point algebra, which is not always obvious.)
Although it is customary to handle the boson algebra in this
way, it does yield an algebra with some unphysical
representations. By endowing the groups with a more subtle
topology and using continuous functions it is possible to get
only the physical, regular, representations, \cite{GN}.

There is an action of $\Omega$ on the algebra of cross-sections
of the Clifford bundle over $\Omega$, and so here too we can
form a crossed product algebra, which will described both the
fermion and boson fields. The Clifford bundle is graded and its
0-component gives rise to the gauge boson algebra just
described
$$ {\mathcal B} =
{\textrm{ind}}_{{\mathcal G}}^\Omega({\Bbb C},\alpha)\rtimes_\tau \Omega.
$$
 (A similar construction occurs in the noncommutative
geometric approach to T-duality in \cite{MR} but with $\Omega$
a finite-dimensional vector group and ${\mathcal G}$ a maximal
rank lattice in $\Omega$ so that $\Omega/{\mathcal G}$ is a
torus. Then ${\textrm{ind}}_{{\mathcal G}}^\Omega({\Bbb
C},\alpha)$ is the C$^*$-algebra representing a principal
$\Omega/{\mathcal G}$-bundle, and ${\textrm{ind}}_{{\mathcal
G}}^\Omega({\Bbb C},\alpha)\rtimes_\tau \Omega$ is the
C$^*$-algebra associated with its T-dual torus bundle. From
this viewpoint T-duality is just Takai duality for particular
C$^*$-algebras.) In the context of this paper the crossed
product is equivalent to the CCR algebra for the bosonic gauge
theory, including longitudinal modes. The full Clifford algebra
is a module for this boson algebra (though no of finite rank),
so that we can consider this as a kind of vector bundle over
the noncommutative space associated with the boson algebra. The
non-commutativity of the bosons means that an uncertainty
principle constrains the fermionic cross sections.

In QED the canonical commutation relations for smeared gauge
boson fields, ${\mathbf A}({\mathbf a})$, ${\mathbf E}({\mathbf
u})$, (${\mathbf a}, {\mathbf u}$ in the Schwartz space
${\mathcal S}(X,{\frak g})$, and ${\mathbf E}({\mathbf u}) =
\int {\mathbf E}({\mathbf x}).{\mathbf u}({\mathbf x})\,d^3{\bf
x}$) are
$$
[{\mathbf E}({\mathbf u}),{\mathbf A}({\mathbf a})] =
-i\frac{\hbar}{\epsilon_0}\int_{{\Bbb R}^3} {\mathbf u}({\mathbf x}).{\mathbf a}^\perp({\mathbf
x})\,d^3{\mathbf x},
$$
where ${\mathbf a}^\perp$ denotes the transverse part.
(Henceforth we shall assume that the gauge has been fixed so
that ${\mathbf a}$ is transverse, and drop the $\perp$.) This
means that the electric field generates translations. More
precisely, exponentiating this to the group
$$
e^{i{\mathbf u}.{\mathbf E}}\left(\frac{e}{\hbar}{\mathbf A}({\mathbf a})\right)e^{-i{\mathbf u}.{\mathbf E}}
 = \left(\frac{e}{\hbar}{\mathbf A}({\mathbf a}) + \frac{e}{\epsilon_0}\int{\mathbf a}.{\mathbf u}\right),
$$
so that $\tau_{eu/\epsilon_0}$ is implemented by
$$
\left[\exp\left(\frac{i\epsilon_0}{e}{\mathbf E}({\mathbf u})\right)\right]
$$
where ${\mathbf E}$ is the quantised electric field.

Our Hamiltonian gauge-fixed description is not manifestly
Lorentz invariant, since Lorentz transformations mix the
magnetic and electric fields, (forcing further gauge
transformations \cite{ALC1}.) This suggests the interesting
question of how the Lorentz symmetry manifests itself in the
induced crossed product.


\section{The Poisson--Gauss law}

Gauge invariance allows us to remove the longitudinal magnetic
fields,  but longitudinal electric fields remain as part of the
translation group.

For $f\in C^\infty_0(X)$ set $\chi_f = \exp(-if)$, so that
$\chi_f^{-1}d\chi_f = -i\,df$, we have
$$
\tau_{df}[\xi](\omega) = \xi(\omega+ df) = \xi(\omega+i\chi_f^{-1}d\chi_f) =
\chi_f\cdot\xi(\omega) = e^{-if}\xi(\omega) .
$$
By definition,$\tau_{edf/\epsilon_0}$ is implemented in the
crossed product by $\exp(i{\mathbf E}\cdot(\nabla f))$, so that
implements multiplication by $\exp(-ief/\epsilon_0)$.

On the other hand a formal calculation shows the implementor
should be $\exp(-i\rho(f)/\epsilon_0)$, where the charge
density operator
$$
\rho(f) = \int_X f({\mathbf x})\,e\Psi_J(x)^*\Psi_J(x)\,d^3{\mathbf x}.
$$
(We need to work with the complex algebra here to incorporate
complex gauge factors.) So (differentiating and using the
Divergence Theorem): we get the Poisson--Gauss Law:
$$
(\nabla\cdot{\mathbf E})(f) = -{\mathbf E}\cdot(\nabla f) =
\frac{1}{\epsilon_0}\rho(f).
$$

More directly, we have the differentiated version
$$
{\textrm{ad}}\left((\nabla f)\cdot{\mathbf E}\right)[\Psi_J(\xi)]
= \frac{e}{\epsilon_0}\Psi_J(f\xi).
$$
Using ${\mathbf u} = \nabla f$, or $f =
\nabla\cdot\nabla^{-2}{\mathbf u}$,
$$
({\mathbf u}\cdot{\mathbf E})\Psi_J(\xi)
 = \Psi_J(\xi)({\mathbf e}\cdot{\mathbf E})+ \frac{e}{\epsilon_0}\Psi_J(\nabla\cdot\nabla^{-2}{\mathbf u}\xi),
 $$
and, using the fact that $\nabla^{-2}$ is an integral operator
with integral kernel $1/(4\pi|{\mathbf x} - \mathbf{y}|)$, and
working with unsmeared fields, we obtain
$$
{\mathbf E}({\mathbf x})\Psi_J({\mathbf y})
 = \Psi_J({\mathbf y})\left[{\mathbf E}({\mathbf x})+ \nabla\frac{e}{4\pi\epsilon_0|{\mathbf x} - \mathbf{y}|}\right].
$$
This can be interpreted as saying that creating a fermion using
$\Psi({\mathbf y})$, also creates its Coulomb field. Similar
ideas appear, without the framework of induced algebras, in
\cite[\S\S79-80]{D}.


\section{Rieffel inducing}


Despite their advantage of being explicit, the above procedures
do not easily extend to non-abelian gauge theories. (Apart from
the obvious difficulty that the $\Omega$ action only preserves
the inducing constraint in the abelian case, there can be
obstructions to extensions in the non-abelian case,
\cite{ALC3,ALC4}.)

In extending the approach it is useful to work with Rieffel's
bimodule inducing, which like Mackey's construction,  allows
one to induce modules as well as algebras. Initially we shall
just do this in an algebraic setting, ignoring the Hilbert
structure. Let ${\mathcal B}$ and ${\mathcal C}$ be algebras,
and ${\mathcal E}$ a ${\mathcal B}$-${\mathcal C}$- bimodule,
(i.e. a left ${\mathcal B}$, right ${\mathcal C}$-module, with
commuting actions.) Then, from from a left ${\mathcal
C}$-module $M$ one can induce a left ${\mathcal B}$-module
${\mathcal E}\otimes_{{\mathcal C}}M$, (the quotient of
${\mathcal E}\otimes M$ by the subspace generated by $\{
e.c\otimes m - e\otimes c.m: e\in {\mathcal E}, c\in {\mathcal
C}, m\in M\}$). The ${\mathcal B}$-action is given by
$b.(e\otimes_{\mathcal C} m) = (b.e)\otimes_{\mathcal C} m$.

When ${\mathcal C}$ is a group algebra the quotient ${\mathcal
E}\otimes_{\mathcal C} M$, can be expressed more simply, by
regarding ${\mathcal E}$ and $M$ as having a group action. For
$h$ in the group we require $e.h\otimes m = e\otimes h.m$ in
the quotient, or, equivalently, $e\otimes m = e.h^{-1}\otimes
h.m$. This is just the requirement that we are in the fixed
point subspace under the action $h: e\otimes m  \mapsto
e.h^{-1} \otimes h.m$.

Omitting the technical details, Rieffel's method extends this
algebraic theory to C$^*$-algebras ${\mathcal B}$ and
${\mathcal C}$ by assuming that the bimodule ${\mathcal E}$ has
a ${\mathcal C}$-valued inner product, ${\mathcal
E}\times{\mathcal E} \to {\mathcal C}$, which is ${\mathcal
C}$-linear in the second variable
$$
\langle e_1, e_2.c\rangle = \langle e_1, e_2\rangle .c,
$$
and *-symmetric
$$
\langle e_1, e_2\rangle^* = \langle e_2, e_1\rangle,
$$
as well as positive $\langle e, e\rangle \geq 0$, that is
positive in the C$^*$-algebra ${\mathcal C}$. When the
${\mathcal C}$-module $M$ has a Hilbert space structure
consistent with the C$^*$-algebra structure of ${\mathcal C}$
we can endow ${\mathcal E}\otimes M$ with the inner product
$$
\langle e_1\otimes m_1, e_2\otimes m_2\rangle =
\langle m_1, \langle e_1, e_2\rangle.m_2\rangle.
$$
In general this is not positive definite so, to get a Hilbert
space we need to factor out by the radical (the vectors
orthogonal to everything). We now note that
$$
\langle e_1, e_2.c\rangle.m_2 =   \langle e_1, e_2\rangle.(c.m_2),
$$
so  that $(e_2.c) \otimes m_2 - e_2\otimes (c.m_2)$ is always
in the radical. This means that the quotient by the radical
will be a quotient of ${\mathcal E}\otimes_{\mathcal C} M$,
showing the connection with  the algebraic approach. (Under
suitable assumpions the inner product is positive definite on
${\mathcal E}\otimes_{\mathcal C} M$, and we shall assume this
to be the case.)

We shall take ${\mathcal C}$ to be a convolution algebra of
functions on ${\mathcal G}$, and ${\mathcal B}$ to be the boson
algebra consistent with our previous notation. It turns out
that an appropriate bimodule is the algebra ${\mathcal E}$ for
the boson fields when gauge symmetries are ignored. (For
example, in the abelian case we would take
$$
{\mathcal E} = {\textrm{ind}}_{\{1\}}^\Omega({\textrm
{Cliff}}(W,Q),\alpha)\rtimes_\tau \Omega \cong
C(\Omega,{\textrm{Cliff}}(W,Q) \rtimes_\tau \Omega,
$$
 instead of ${\textrm
{ind}}_{{\mathcal
G}}^\Omega({\textrm{Cliff}}(W,Q),\alpha)\rtimes_\tau \Omega$,
but, in general, we could use any well-defined *-algebra
${\mathcal E}$ for the bosons ignoring gauge symmetries, along
with a compatible space-time description of the fermions.)   We
assume that there is a ${\mathcal G}$-action $\alpha$ on
${\mathcal E}$ as before, and  then the generalised fixed point
algebra (in the multiplier algebra of ${\mathcal E}$) is
${\mathcal B} = {\mathcal E}^{\mathcal G}$) which has a
multiplication action on ${\mathcal E}$ commuting with the
action of ${\mathcal G}$, so that we may use Rieffel inducing
to induce ${\mathcal G}$-modules to modules for ${\mathcal B}
$.
 (The amenability of ${\mathcal G}$ enables one to prove the existence of
generalised fixed point subspaces, although that can also be
done explicitly. See also \cite{ALC}.) The algebra ${\mathcal
E}$ is a group algebra of a nilpotent group, a central
extension of the vector group $\Omega\times\Omega$, and it is
therefore amenable. A pre-inner product $\langle
e_1|e_2\rangle$ on ${\mathcal E}$ can be constructed by taking
the invariant mean of the product $e_1^*e_2$, and we obtain the
${\mathcal C}$-valued inner product by defining
$$
\langle e_1, e_2\rangle(\chi) = \langle e_1, \chi.e_2\rangle,
$$
which turns out to have the correct properties, provided that
we specify the gauge group algebra ${\mathcal C}$ to include
these functions.

Since we have a group algebra The inducing procedure takes a
${\mathcal G}$-module $M$ to $F(M) = {\mathcal
E}\otimes_{\mathcal G} M = ({\mathcal E}\otimes M)^{\mathcal
G}$, where the tensor product action of $\chi\in{\mathcal G}$
sends $\mu\otimes m$ to $\alpha_\chi[\mu]\otimes \chi.m$.
(Strictly speaking, we should induce from the group algebra,
but this is equivalent, and simpler.)

For any ${\mathcal G}$-intertwiner $f$ between ${\mathcal
G}$-modules $M \to N$, we can define  $F(f):\mu\otimes m
\mapsto \mu\otimes f(m)$, which commutes with the ${\mathcal
G}$ action, and so preserves the gauge fixed algebra. The
action of $F(f)$ on the second tensor factor commutes with the
actions of ${\mathcal B}$ on the first, so that $F(f)$ is a
${\mathcal B}$-morphism and $F$ defines a functor.

\medskip\noindent
{\bf Example 1.} Consider the case of $M= {\Bbb C}$ the trivial
${\mathcal G}$ module. The induced module can be determined by
using the generator $1\in {\Bbb C}$, which enables one to
identify ${\mathcal E}\otimes{\Bbb C}$ with ${\mathcal E}$ by
$\mu\otimes1 \mapsto \mu$. Under this identification
$({\mathcal E}\otimes{\Bbb C})^{\mathcal G} = {\mathcal
E}^{\mathcal G}$, so that $F({\Bbb C})$ is the fixed point
algebra ${\mathcal E}^{\mathcal G} = {\mathcal B}$. The algebra
product is inherited from that on ${\mathcal E}$ and the normal
product on ${\Bbb C}$.

\begin{theorem}
The map $F: M \mapsto {\mathcal E}\otimes_{\mathcal G} M$ from
${\mathcal G}$-modules to ${\mathcal B}$--bimodules, which
takes a ${\mathcal G}$-intertwining operator $f: M\to N$  to
$F(f):\mu\otimes_{\mathcal G} m \mapsto \mu\otimes_{\mathcal G}
f(m)$ defines a functor from ${\mathcal G}$-modules and
intertwiners to ${\mathcal B}$-bimodules and intertwiners.
\end{theorem}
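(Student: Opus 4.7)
The plan is to unpack the four conditions of being a functor and verify each in turn, leaning on the structural observations already made in the paragraphs before the theorem statement.

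First, I would confirm that $F(M)$ really lands in the category of $\mathcal{B}$-bimodules. Since $\mathcal{B} = \mathcal{E}^{\mathcal{G}}$ sits inside the multiplier algebra of $\mathcal{E}$, there are commuting left and right multiplication actions of $\mathcal{B}$ on $\mathcal{E}$, and both commute with $\alpha_{\chi}$ for $\chi \in \mathcal{G}$ by the very definition of fixed-point subalgebra. Hence these two actions pass to $(\mathcal{E} \otimes M)^{\mathcal{G}}$: if $b \in \mathcal{B}$ and $\mu \otimes m$ is $\mathcal{G}$-invariant, then $b \cdot \mu \otimes m$ and $\mu \cdot b \otimes m$ are again $\mathcal{G}$-invariant because $\alpha_\chi[b\mu] = b\,\alpha_\chi[\mu]$ and similarly on the right. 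The associativity and unitality of the two actions, and their mutual commutation, are inherited from $\mathcal{E}$.

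Next I would verify that $F(f):\mu \otimes_{\mathcal{G}} m \mapsto \mu \otimes_{\mathcal{G}} f(m)$ is well-defined on the quotient/invariant subspace. The only relations imposed are of the form $\alpha_\chi[\mu] \otimes \chi.m \sim \mu \otimes m$ (equivalently, working in the invariant picture, invariance under $\chi \mapsto \alpha_\chi \otimes \chi$); since $f$ is a $\mathcal{G}$-intertwiner, $f(\chi.m) = \chi.f(m)$, so applying $\mathrm{id}_{\mathcal{E}} \otimes f$ respects these relations and sends invariants to invariants. Having established well-definedness, the fact that $F(f)$ is a $\mathcal{B}$-bimodule morphism is immediate: left and right $\mathcal{B}$-multiplication only touch the $\mathcal{E}$ factor, while $F(f)$ only touches the $M$ factor, and a tensor product of commuting operations commutes.

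Finally I would check functoriality in the strict sense. From the formula, $F(\mathrm{id}_M) = \mathrm{id}_{\mathcal{E}} \otimes \mathrm{id}_M$ which is the identity on $\mathcal{E} \otimes_{\mathcal{G}} M$, and for composable intertwiners $f:M \to N$, $g:N \to P$,
\[
F(g \circ f)(\mu \otimes_{\mathcal{G}} m) = \mu \otimes_{\mathcal{G}} g(f(m)) = F(g)\bigl(\mu \otimes_{\mathcal{G}} f(m)\bigr) = (F(g) \circ F(f))(\mu \otimes_{\mathcal{G}} m).
\]

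I do not anticipate a genuine obstacle: the content of the statement is almost entirely bookkeeping, and the only subtle point is the well-definedness check in the second step, which hinges on the hypothesis that $f$ is a $\mathcal{G}$-intertwiner. Example~1 already illustrates how the construction plays out on the trivial module, which can be cited to make the bimodule claim more concrete ($F(\mathbb{C}) = \mathcal{B}$ with its regular bimodule structure).
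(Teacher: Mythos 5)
Your proposal is correct and follows essentially the same route as the paper: the $\mathcal{B}$-bimodule structure comes from the fact that left and right multiplication by elements of $\mathcal{B}=\mathcal{E}^{\mathcal{G}}$ commute with the $\mathcal{G}$-action and hence preserve the invariant subspace, while $F(f)$ is well defined because $f$ intertwines and is a bimodule map because it acts on the other tensor factor. You merely spell out the identity and composition axioms, which the paper leaves implicit.
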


\begin{proof}
We have proved most of this except the statements about
${\mathcal B}$-bimodules. Since ${\mathcal G}$ acts by
automorphisms of ${\mathcal E}$, we have, denoting the product
of $\mu, \nu \in {\mathcal E}$ by $\mu*\nu$,
$\alpha_\chi[\mu*\nu] = \alpha_\chi[\mu]*\alpha_\chi[\nu]$.
When $\mu \in {\mathcal E}^{\mathcal G}$ this gives
$$
\alpha_\chi[\mu*\nu] \otimes \chi.n =
\mu*\alpha_\chi[\nu]\otimes \chi.n = (\mu\otimes 1)*\chi(\nu\otimes n),
$$
so that we can define an action of ${\mathcal B}= {\mathcal
E}^{\mathcal G}$ on $({\mathcal E}\otimes N)^{\mathcal G}$ by
$\mu*[\nu\otimes n] = (\mu*\nu)\otimes n$. We could equally
well have used right multiplication by $f$, so that $F(M)$ is a
${\mathcal B}$-bimodule action. The action commutes with the
${\mathcal G}$ intertwining operators, which affect the other
factor in the tensor product.
\end{proof}


\section{A monoidal functor}

 A strict monoidal or tensor category is a category ${\mathcal C}$, together
with (i) an associative bifunctor $\otimes: {\mathcal
C}\times{\mathcal C} \to {\mathcal C}$, and (ii) a unit object
$U$ such that $U\otimes A \cong A \cong A\otimes U$ for all
objects $A$, satisfying the obvious consistency conditions that
the isomorphisms agree for $U\otimes U \cong U$, and for
$$
A\otimes B \cong (A\otimes U)\otimes B \cong A\otimes (U\otimes B) \cong A\otimes B.
$$

\medskip
\noindent{\bf Example 2.}
 The ${\mathcal G}$-modules form a monoidal category under the tensor
product (with the tensor product action, $\chi\otimes\chi$, of
${\mathcal G}$), and with unit the trivial module ${\Bbb C}$.
${\mathcal B}$-bimodules form a monoidal category with tensor
product $\otimes_{\mathcal B}$ and unit object ${\mathcal B}$.

\medskip
The question now arises as to whether we can interpret $F: M
\mapsto ({\mathcal E}\otimes M)^{\mathcal G}$ as a monoidal
functor. If so then the unit object should be $F({\Bbb C})
\cong {\mathcal B}$, suggesting that $F$ maps from the monoidal
category of ${\mathcal G}$-modules to the monoidal category of
${\mathcal B}$-bimodules. The main extra piece of information
needed is a map $F_{MN}$, for each pair of ${\mathcal
G}$-modules $M$ and $N$, which takes $F(M)\otimes_{\mathcal B}
F(N)$ to $F(M\otimes N)$. The obvious map is to start with
$$
(\mu\otimes_{\mathcal G}  m)\otimes (\nu\otimes_{\mathcal G}  n) \mapsto (\mu*\nu)\otimes_{\mathcal G} (m\otimes n).
$$
For $\beta\in {\mathcal B}$, $((\mu*\beta)\otimes_{\mathcal G}
m)\otimes(\nu\otimes_{\mathcal G} n) - (\mu\otimes_{\mathcal G}
m)\otimes (\beta*\nu\otimes_{\mathcal G} n)$ maps to
$$
[((\mu*\beta)*\nu) - (\mu*(\beta*\nu))]\otimes_{\mathcal G}
(m\otimes n) = 0,
$$
so that the right hand side depends only on
$(\mu\otimes_{\mathcal G} m)\otimes_{\mathcal B}
(\nu\otimes_{\mathcal G}  n)$, and we can regard $F_{MN}$ as a
morphism $F(M)\otimes_{\mathcal B} F(N)$ to $F(M\otimes N)$.

\begin{theorem}
The functor $F$, together with
$$
F_{MN}:(\mu\otimes_{\mathcal G}  m)\otimes_{\mathcal B}(\nu\otimes_{\mathcal G}  n)
\mapsto (\mu*\nu)\otimes_{\mathcal G} (m\otimes n)
$$
and  the identification ${\mathcal B} \to F({\Bbb C})$, already
used, is a monoidal functor, from ${\mathcal G}$-modules with
the normal tensor product to ${\mathcal B}$-bimodules with the
tensor product $\otimes_{\mathcal B}$.
\end{theorem}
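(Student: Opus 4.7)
The plan is to verify the three conditions that promote the functor $F$ of Theorem~1 to a monoidal functor: (i) well-definedness of $F_{MN}$ as a $\mathcal{B}$-bimodule morphism; (ii) naturality of $F_{MN}$ in both arguments; (iii) the coherence axioms (associativity pentagon and unit triangles). Since the tensor product of $\mathcal{G}$-modules is strict and the usual associator/unitor isomorphisms for bimodules can be inserted silently, the coherence axioms reduce to equalities of well-defined maps on simple tensors.

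First I would confirm well-definedness. The passage preceding the theorem already shows that $F_{MN}$ descends from $F(M) \otimes F(N)$ to $F(M) \otimes_{\mathcal{B}} F(N)$. It remains to check compatibility with the two $\otimes_\mathcal{G}$ quotients defining $F(M)$ and $F(N)$; here we use that $\mathcal{G}$ acts by algebra automorphisms on $\mathcal{E}$, so $\alpha_\chi[\mu * \nu] = \alpha_\chi[\mu] * \alpha_\chi[\nu]$, so that the diagonal $\mathcal{G}$-action on $\mathcal{E}\otimes(M\otimes N)$ is indeed induced from the separate actions on each factor. That $F_{MN}$ is a $\mathcal{B}$-bimodule morphism follows from the formula in Theorem~1: left multiplication by $\beta \in \mathcal{B}$ on the outer tensor factor can be absorbed into $\mu * \nu$ by associativity of $*$, and right multiplication is handled symmetrically.

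Second, naturality: given $\mathcal{G}$-intertwiners $f: M \to M'$ and $g: N \to N'$, both $F(f \otimes g) \circ F_{MN}$ and $F_{M'N'} \circ (F(f) \otimes_{\mathcal{B}} F(g))$ send $(\mu \otimes_{\mathcal{G}} m) \otimes_{\mathcal{B}} (\nu \otimes_{\mathcal{G}} n)$ to $(\mu * \nu) \otimes_{\mathcal{G}} (f(m) \otimes g(n))$, by direct computation. For the pentagon, both composites from $F(M) \otimes_{\mathcal{B}} F(N) \otimes_{\mathcal{B}} F(P)$ to $F(M \otimes N \otimes P)$ send a triple pure tensor to $(\mu * \nu * \rho) \otimes_{\mathcal{G}} (m \otimes n \otimes p)$, and the equality is exactly associativity of $*$ on $\mathcal{E}$. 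The unit triangles are handled by Example~1: under the identification $\mu \otimes 1 \mapsto \mu$ yielding $F(\mathbb{C}) \cong \mathcal{B}$, the maps $F_{M,\mathbb{C}}$ and $F_{\mathbb{C},M}$ become precisely the canonical $\mathcal{B}$-bimodule isomorphisms $F(M) \otimes_{\mathcal{B}} \mathcal{B} \cong F(M) \cong \mathcal{B} \otimes_{\mathcal{B}} F(M)$.

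The main obstacle is purely bookkeeping: three distinct quotients coexist (the inner $\otimes_{\mathcal{G}}$ inside each $F$, the outer $\otimes_{\mathcal{B}}$, and the $\otimes$ of $\mathcal{G}$-modules targeted by $F_{MN}$), and at each stage one must invoke either the compatibility of the $\mathcal{G}$-action with the algebra product on $\mathcal{E}$, or the associativity of $*$. No new categorical construction is required, since all pieces are determined on simple tensors.
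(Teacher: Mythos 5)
Your proof is correct and follows essentially the same route as the paper's: well-definedness over $\otimes_{\mathcal B}$ from the discussion preceding the theorem, ${\mathcal G}$-equivariance of the convolution product to show $F_{MN}$ lands in the fixed-point set, and associativity of $*$ for the coherence with the associators. You are in fact more thorough than the paper, which leaves naturality and the unit triangles implicit (minor quibble: the coherence axiom for a monoidal functor that you verify is usually called the hexagon or associativity square rather than the pentagon, but the computation you give is the correct one).
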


\begin{proof}
Due to the associativity of the convolution multiplication we
see both that this is consistent with strict associativity
$\xi\otimes(\eta\otimes\zeta) \to
(\xi\otimes\eta)\otimes\zeta$. We have checked out that the
left-hand side makes sense, but we also have
\begin{eqnarray*}
\chi.[(\mu\otimes m)\otimes\chi.[(\nu\otimes n)]
&=& ((\alpha_\chi[\mu]\otimes \chi.m)\otimes(\alpha_\chi[\nu]\otimes \chi.n)\\
&\mapsto&
(\alpha_\chi[\mu*\nu]\otimes \chi.(m\otimes n)\\
&=& \chi.[(\mu*\nu)\otimes (m\otimes n)],
\end{eqnarray*}
so that products map ${\mathcal G}$-fixed elements to
${\mathcal G}$-fixed elements.
\end{proof}

In the abelian case, we see that
\begin{eqnarray*}
(\phi_{a,u}\otimes m)\otimes (\phi_{b,v}\otimes n)
&\mapsto& (\phi_{a,u}*\phi_{b,v})\otimes(m\otimes n)\\
&=& e^{ib(u)}\phi_{a+b,u+v}\otimes(m\otimes n),
\end{eqnarray*}
so that $F_{MN}$ sets up an isomorphism. This means that we
actually have a strong monoidal functor.

The noncommutativity of ${\mathcal E}$ leads to a braiding in
the image category.

\begin{theorem}
When $F$ is a strong monoidal functor, the tensor category of
${\mathcal B}$-bimodules with tensor product $\otimes_{\mathcal
B}$ is a braided category with symmetric braiding $\Phi_F =
F_{NM}^{-1}\circ F(\Phi)\circ F_{MN}: F(M)\otimes_{\mathcal B}
F(N) \to F(N)\otimes_{\mathcal B} F(M)$, as in the diagram
$$
F(M)\otimes_{\mathcal B} F(N) \to F(M\otimes N) \to F(N\otimes M) \cong F(N)\otimes_{\mathcal B} F(M),
$$
where the outer maps are given by the consistency maps $F_{MN}$
and $F_{NM}^{-1}$ and the middle map is $F(\Phi)$.
\end{theorem}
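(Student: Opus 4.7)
The plan is to transport the symmetric braiding $\Phi_{M,N}: M\otimes N \to N\otimes M$ on $\mathcal{G}$-modules across the strong monoidal functor $F$ using its coherence isomorphisms $F_{MN}$. This is a standard categorical transfer of structure, so the proof will consist of checking the four formal properties required of a symmetric braiding on $\mathcal{B}$-bimodules: well-definedness of $\Phi_F$ as a $\mathcal{B}$-bimodule map, naturality in both arguments, the two hexagon identities, and the symmetry relation $\Phi_F\circ\Phi_F = \mathrm{id}$.

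First I would observe that $\Phi_F$ is well-defined precisely because $F$ is strong monoidal, so each $F_{MN}$ is invertible; the composition $F_{NM}^{-1}\circ F(\Phi_{M,N})\circ F_{MN}$ is a composition of $\mathcal{B}$-bimodule morphisms (the outer factors by construction, the middle by the previous theorem showing that $F$ sends $\mathcal{G}$-intertwiners to $\mathcal{B}$-bimodule maps). Next, naturality of $\Phi_F$ in $M$ and $N$ reduces to naturality of $\Phi$ in $\mathcal{G}$-Mod together with naturality of the structure morphisms $F_{MN}$, which is immediate by chasing the obvious square. The symmetry property $\Phi_F^2 = \mathrm{id}$ follows at once: the middle $F_{NM}$ and $F_{NM}^{-1}$ cancel, and one is left with $F_{MN}^{-1}\circ F(\Phi_{N,M}\circ \Phi_{M,N})\circ F_{MN} = F_{MN}^{-1}\circ F(\mathrm{id})\circ F_{MN} = \mathrm{id}$, using that $\Phi$ on $\mathcal{G}$-modules is the usual flip.

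The main obstacle, and the only step requiring real diagram-chasing, is verifying the hexagon identities. Explicitly, for $\mathcal{G}$-modules $M,N,P$ one needs
\begin{equation*}
\Phi_F\bigl(F(M),F(N)\otimes_{\mathcal B}F(P)\bigr) = (\mathrm{id}\otimes_{\mathcal B}\Phi_F)\circ(\Phi_F\otimes_{\mathcal B}\mathrm{id})
\end{equation*}
together with its mirror identity. The strategy is to expand both sides using the definition of $\Phi_F$ and slot the coherence square
\begin{equation*}
F_{M,N\otimes P}\circ (\mathrm{id}\otimes_{\mathcal B} F_{N,P}) = F_{M\otimes N,P}\circ (F_{M,N}\otimes_{\mathcal B}\mathrm{id}) \text{ (up to the associators)}
\end{equation*}
from the strong monoidal structure of $F$ into the middle of the diagram, reducing everything to the hexagon for $\Phi$ inside $\mathcal{G}$-Mod. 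Since $\otimes$ on $\mathcal{G}$-modules is strictly associative and $\Phi$ satisfies the symmetric hexagon there, applying $F$ (which preserves composition and identity) and conjugating by $F_{MN}, F_{NM}^{-1}$ throughout yields the required identity on the $\mathcal{B}$-bimodule side. The mirror hexagon is handled identically.

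In sum, the content of the theorem is that a strong monoidal functor automatically pulls back a (symmetric) braiding, and the only genuine verification — the hexagon — is delegated to the coherence of $F_{MN}$; the noncommutativity of $\mathcal{E}$ does not obstruct this, though it is exactly what makes $\Phi_F$ act nontrivially on $F(M)\otimes_{\mathcal B} F(N)$ even though it comes from the trivial flip upstairs.
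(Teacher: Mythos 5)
Your proposal is correct and follows essentially the same route as the paper: both transport the trivial flip $\Phi$ on ${\mathcal G}$-modules through the coherence isomorphisms $F_{MN}$ of the strong monoidal functor, and both obtain symmetry from $\Phi^2 = \mathrm{id}$ by cancelling the conjugating maps. The only difference is that you spell out the naturality and hexagon verifications that the paper leaves implicit, which is a welcome (and routine) completion rather than a different argument.
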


\begin{proof}
The tensor product of ${\mathcal G}$-modules is braided
trivially by the flip ${\mathcal G}$-morphism $\Phi: m\otimes
n\mapsto n\otimes m$. Thus
$$
F(M)\otimes_{\mathcal B} F(N) \to F(M\otimes N) \to F(N\otimes M) \cong F(N)\otimes_{\mathcal B} F(M),
$$
where the outer maps are given by the consistency maps $F_{MN}$
and $F_{NM}^{-1}$ and the middle map is $F(\Phi)$. Although
this braiding is non-trivial it is symmetric, since $\Phi^2 =
{\textrm{id}}$ gives $F(\Phi)^2 = {\textrm{id}}$. For many
purposes this is almost as good as being the standard flip
braiding.
\end{proof}

We can see that the braiding is non-trivial in the abelian case
by an explicit calculation:
\begin{eqnarray*}
(\phi_{a,u}\otimes m)\otimes (\phi_{b,v}\otimes n) &\mapsto&
e^{ib(u)}\phi_{a+b,u+v}\otimes(m\otimes n)\\
(\phi_{b,v}\otimes n)\otimes(\phi_{a,e}\otimes m) &\mapsto&
e^{ia(v)}\phi_{a+b,u+v}\otimes(n\otimes m).
\end{eqnarray*}
We can also easily check the symmetry in this case.


\section{Transferring fermionic structure to the whole QED
algebra}

Using the functor $F$, any structures which can be defined
categorically for ${\mathcal G}$-modules can now be defined for
${\mathcal B}$-modules.

The argument which gave the braiding similarly leads to the
following result.

\begin{lemma}
For any morphism of ${\mathcal G}$-modules $\phi:M\otimes N \to
P$ there is a morphism of ${\mathcal B}$-bimodules $\phi_F:
F(M)\otimes_{\mathcal B} F(N) \to F(P)$, defined by $\phi_F =
F(\phi)\circ F_{MN}$.
\end{lemma}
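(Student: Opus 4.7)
The plan is to assemble $\phi_F$ as a composition of two maps that have already been shown (or can quickly be shown) to be $\mathcal{B}$-bimodule morphisms, and to observe that bimodule morphisms are closed under composition. Concretely, I would factor $\phi_F$ as
\[
F(M)\otimes_{\mathcal B} F(N) \xrightarrow{F_{MN}} F(M\otimes N) \xrightarrow{F(\phi)} F(P),
\]
where the first arrow is the monoidal coherence map of Theorem~2 and the second is the functorial image of the $\mathcal{G}$-intertwiner $\phi$.

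First I would recall from Theorem~1 that $F$ sends every $\mathcal{G}$-intertwiner to a $\mathcal{B}$-bimodule morphism, so $F(\phi): F(M\otimes N) \to F(P)$ is automatically a $\mathcal{B}$-bimodule morphism. Next I would observe that $F_{MN}$ is also a $\mathcal{B}$-bimodule morphism: on representatives it sends $(\mu\otimes_{\mathcal G} m)\otimes_{\mathcal B}(\nu\otimes_{\mathcal G} n)$ to $(\mu*\nu)\otimes_{\mathcal G}(m\otimes n)$, and since the action of $\beta\in{\mathcal B}={\mathcal E}^{\mathcal G}$ on an element of $F(X)$ is by convolution on the $\mathcal{E}$-factor, compatibility with the left and right $\mathcal{B}$-actions is immediate from the associativity of $*$ in $\mathcal E$, exactly as used in the proof of Theorem~2.

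Then the composition $F(\phi)\circ F_{MN}$ is a $\mathcal{B}$-bimodule morphism as the composite of two such morphisms. Written on representatives,
\[
\phi_F\bigl((\mu\otimes_{\mathcal G} m)\otimes_{\mathcal B}(\nu\otimes_{\mathcal G} n)\bigr)
= (\mu*\nu)\otimes_{\mathcal G}\phi(m\otimes n),
\]
which is manifestly well defined because $\phi$ is a $\mathcal G$-intertwiner (so the output lies in the $\mathcal G$-fixed subspace) and because the $\mathcal B$-balancing was already absorbed in the definition of $F_{MN}$ in Theorem~2.

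There is no real obstacle here; the only point that needs a line of checking is the $\mathcal{B}$-bilinearity of $F_{MN}$, which is the same computation $((\mu*\beta)*\nu)\otimes_{\mathcal G}(m\otimes n) = (\mu*(\beta*\nu))\otimes_{\mathcal G}(m\otimes n)$ already done in the derivation preceding Theorem~2. Everything else follows formally from the fact that $F$ is a monoidal functor, so this lemma is essentially a consequence of packaging that structure.
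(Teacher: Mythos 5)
Your proof is correct and follows essentially the same route as the paper, which proves the lemma implicitly by noting that ``the argument which gave the braiding'' (i.e.\ composing the coherence map $F_{MN}$ with the functorial image $F(\phi)$ of a $\mathcal G$-intertwiner, each already known to be a $\mathcal B$-bimodule morphism) carries over verbatim. Your explicit check of the $\mathcal B$-bilinearity of $F_{MN}$ via associativity of $*$ is exactly the computation the paper performs in the derivation preceding its Theorem on the monoidal functor.
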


This has many useful corollaries,  such as the following.

\begin{corollary}
The gauge group ${\mathcal G}$ acts as automorphisms of an
algebra ${\mathcal A}$ if and only if its multiplication $\mu$
is a ${\mathcal G}$-morphism. In this case $F({\mathcal A})$ is
an algebra with multiplication $F(\mu)\circ F_{{\mathcal
A}{\mathcal A}}$.
\end{corollary}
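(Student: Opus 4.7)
The plan is to handle the biconditional first and then use the preceding lemma to get the algebra structure, verifying associativity via the coherence of the monoidal functor.

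For the equivalence, I would unpack definitions. The statement that $\chi \in \mathcal{G}$ acts by an algebra automorphism of $\mathcal{A}$ is, by definition, the identity $\chi.\mu(a\otimes b) = \mu(\chi.a \otimes \chi.b)$. Since the $\mathcal{G}$-action on $\mathcal{A}\otimes \mathcal{A}$ in our monoidal category is the diagonal one $\chi \otimes \chi$ (as recorded in Example 2), this identity is literally the condition that $\mu: \mathcal{A}\otimes \mathcal{A} \to \mathcal{A}$ intertwines the two $\mathcal{G}$-actions, i.e.\ is a $\mathcal{G}$-morphism. Both implications are then tautological.

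Granted this, apply the preceding lemma with $M = N = P = \mathcal{A}$ and $\phi = \mu$. This produces a $\mathcal{B}$-bimodule morphism $\mu_F = F(\mu)\circ F_{\mathcal{A}\mathcal{A}} : F(\mathcal{A})\otimes_\mathcal{B} F(\mathcal{A}) \to F(\mathcal{A})$, which is the candidate multiplication. It remains to verify associativity: $\mu_F \circ (\mu_F \otimes \mathrm{id}) = \mu_F \circ (\mathrm{id} \otimes \mu_F)$.

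The key computation uses (i) functoriality of $F$, (ii) naturality of $F_{MN}$ in each variable, and (iii) the hexagon/coherence axiom that the strong monoidal structure maps $F_{MN}$ must satisfy. Using (ii) to push each inner $F(\mu)$ past the outer $F_{MN}$, the left-hand side rewrites as $F(\mu \circ (\mu \otimes \mathrm{id})) \circ F_{\mathcal{A}\otimes \mathcal{A},\mathcal{A}} \circ (F_{\mathcal{A}\mathcal{A}} \otimes \mathrm{id})$ and the right-hand side as $F(\mu \circ (\mathrm{id} \otimes \mu)) \circ F_{\mathcal{A}, \mathcal{A}\otimes \mathcal{A}} \circ (\mathrm{id} \otimes F_{\mathcal{A}\mathcal{A}})$. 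The inner composites agree because $\mu$ is associative, and the outer composites agree by the monoidal coherence of $F$.

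The only real obstacle is the coherence identity for $F_{MN}$, which the paper has not proved abstractly. Fortunately it can be checked by direct inspection of the formula in Theorem 2: both three-fold composites send $(\mu_1 \otimes_\mathcal{G} m_1) \otimes_\mathcal{B} (\mu_2 \otimes_\mathcal{G} m_2) \otimes_\mathcal{B} (\mu_3 \otimes_\mathcal{G} m_3)$ to $(\mu_1 * \mu_2 * \mu_3) \otimes_\mathcal{G} (m_1 \otimes m_2 \otimes m_3)$, where the unambiguity of the $\mathcal{E}$-product is the associativity of convolution. Thus associativity of $\mu_F$ reduces to the already-established associativity of the convolution product on $\mathcal{E}$ together with that of $\mu$, completing the proof.
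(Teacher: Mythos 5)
Your proof follows essentially the same route as the paper's: the biconditional is the same definition-unwinding (the diagonal action $\chi\otimes\chi$ on ${\mathcal A}\otimes{\mathcal A}$ makes ``acts by automorphisms'' literally the intertwining condition), and the multiplication on $F({\mathcal A})$ is obtained by applying the preceding lemma with $M=N=P={\mathcal A}$, exactly as in the paper. Your additional verification of associativity of $\mu_F$ --- via naturality of $F_{MN}$ and the coherence identity, checked concretely from the explicit formula where it reduces to associativity of the convolution product on ${\mathcal E}$ --- goes beyond what the paper writes down and usefully closes a step the paper leaves implicit.
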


\begin{proof}
The multiplication map $\mu:{\mathcal A}\otimes{\mathcal A} \to
{\mathcal A}$  admits ${\mathcal G}$ as algebra automorphims if
and only if $\mu$ intertwines the actions of ${\mathcal G}$ on
${\mathcal A}\otimes{\mathcal A}$ and ${\mathcal A}$, so that
$\mu\circ(\chi\otimes\chi) = \chi\circ\mu$, which is precisely
the condition that $\mu$ be a morphism in the category. Under
this condition, take $M=N=P = {\mathcal A}$.
\end{proof}

There is a similar argument for modules.

\begin{corollary}
Let ${\mathcal A}$ be an algebra on which ${\mathcal G}$ acts
as automorphisms, and $N$ a covariant $({\mathcal A},{\mathcal
G})$-module defined by an action $a: {\mathcal A}\otimes N \to
N$, on which ${\mathcal G}$ also acts in a covariant way, that
is $\chi A\chi^{-1} = \chi[A]$ for all $A\in {\mathcal A}$.
Then $F(N)$ is an $F({\mathcal A})$-module.
\end{corollary}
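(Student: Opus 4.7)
The plan mirrors the preceding corollary: extract a $\mathcal{G}$-morphism from the covariance hypothesis, apply the Lemma to transport it through $F$, and then verify the module axiom using coherence of the monoidal functor.

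First, the covariance condition $\chi A \chi^{-1} = \chi[A]$ for $A\in\mathcal{A}$ is exactly the statement that the action $a:\mathcal{A}\otimes N \to N$ intertwines the diagonal $\mathcal{G}$-action on $\mathcal{A}\otimes N$ with the $\mathcal{G}$-action on $N$, so $a$ is a $\mathcal{G}$-morphism. The Lemma then produces an $F(\mathcal{A})$-action map
$$
a_F = F(a) \circ F_{\mathcal{A},N} : F(\mathcal{A}) \otimes_{\mathcal{B}} F(N) \to F(N),
$$
while the previous corollary already provides the algebra product $m_F = F(\mu)\circ F_{\mathcal{A},\mathcal{A}}$ on $F(\mathcal{A})$.

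The remaining task is to verify associativity of $a_F$ with respect to $m_F$, that is,
$$
a_F \circ (m_F \otimes_{\mathcal{B}} \mathrm{id}) = a_F \circ (\mathrm{id} \otimes_{\mathcal{B}} a_F).
$$
Expanding both sides, naturality of $F_{\cdot,\cdot}$ rewrites $F_{\mathcal{A},N}\circ(F(\mu)\otimes \mathrm{id})$ as $F(\mu\otimes \mathrm{id})\circ F_{\mathcal{A}\otimes\mathcal{A},N}$ on the left, and analogously for the right-hand side with $a$ in place of $\mu$. The coherence identity
$$
F_{\mathcal{A}\otimes\mathcal{A},N} \circ (F_{\mathcal{A},\mathcal{A}}\otimes \mathrm{id}) = F_{\mathcal{A},\mathcal{A}\otimes N} \circ (\mathrm{id}\otimes F_{\mathcal{A},N})
$$
built into the monoidality of $F$ (established just before Theorem 2) then brings the two triple composites to a common form, and functoriality of $F$ reduces the required equality to the ordinary associativity $a\circ(\mu\otimes \mathrm{id}) = a\circ(\mathrm{id}\otimes a)$ of the $\mathcal{A}$-action on $N$, applied inside $F$.

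I expect the real obstacle to be bookkeeping rather than substance: one must check that each arrow genuinely descends through the $\mathcal{G}$-fixed subspace and the $\otimes_{\mathcal{B}}$-balanced quotient, exactly in the spirit of the verification of $F_{MN}$ preceding Theorem 2. Once these quotients are shown to be respected, the associativity diagram closes formally, and, if $\mathcal{A}$ is unital, applying $F$ to the unit inclusion together with the identification $F(\mathbb{C})\cong \mathcal{B}$ yields the unit axiom by the same diagram chase.
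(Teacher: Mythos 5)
Your proof follows the paper's route exactly: the paper's entire proof of this corollary is ``Take $M = {\mathcal A}$, and $P=N$,'' i.e.\ specialize the Lemma to obtain $a_F = F(a)\circ F_{{\mathcal A},N}$, which is precisely your first step. The associativity and unit verifications you then sketch via naturality and coherence of the monoidal structure maps are correct but are left implicit in the paper, being subsumed under its general remark that $F$ transports any structure definable in the category of ${\mathcal G}$-modules to the corresponding structure in the category of ${\mathcal B}$-bimodules.
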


\begin{proof}
Take $M = {\mathcal A}$, and $P=N$.
\end{proof}

\begin{corollary}
Let $W$ be a ${\mathcal G}$-module with a ${\mathcal
G}$-invariant quadratic form $Q:W\otimes W \to {\Bbb C}$. Then
$Q_F: F(W)\otimes_{\mathcal B} F(W) \to F({\Bbb C}) = {\mathcal
B}$ is a ${\mathcal B}$-valued quadratic form on $F(W)$.
\end{corollary}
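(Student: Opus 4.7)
The plan is to obtain $Q_F$ as a direct application of the preceding Lemma, with $M = N = W$ and $P = \mathbb{C}$. The hypothesis that $Q$ is $\mathcal{G}$-invariant is exactly the statement that $Q: W \otimes W \to \mathbb{C}$ is a morphism in the category of $\mathcal{G}$-modules, with the trivial action on $\mathbb{C}$. The Lemma then yields a $\mathcal{B}$-bimodule morphism
$$
Q_F = F(Q) \circ F_{WW}: F(W) \otimes_{\mathcal{B}} F(W) \to F(\mathbb{C}) \cong \mathcal{B},
$$
and the $\mathcal{B}$-bilinearity in the bimodule sense is automatic from the fact that $Q_F$ is a $\mathcal{B}$-bimodule morphism whose domain is the tensor product over $\mathcal{B}$.

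What remains is to justify the name ``quadratic form'' by checking a symmetry property. Here I would invoke the previous Theorem on braiding. A $\mathcal{G}$-invariant quadratic form is symmetric in the ordinary sense, i.e.\ $Q \circ \Phi = Q$, where $\Phi$ is the flip $\mathcal{G}$-morphism $m \otimes n \mapsto n \otimes m$. Applying the functor $F$ and composing with $F_{WW}$ gives
$$
Q_F = F(Q) \circ F_{WW} = F(Q \circ \Phi) \circ F_{WW} = F(Q) \circ F(\Phi) \circ F_{WW}.
$$
By the definition of the induced braiding, $F(\Phi) \circ F_{WW} = F_{WW} \circ \Phi_F$, so $Q_F \circ \Phi_F = Q_F$. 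Thus $Q_F$ is symmetric with respect to the braided flip $\Phi_F$ of the $\mathcal{B}$-bimodule category, which is the right notion of symmetry given the noncommutativity of $\mathcal{E}$.

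The main obstacle, such as it is, is not a calculational one but a conceptual one: ordinary symmetry of $Q_F$ would fail, and one must recognize that in the target braided category the correct notion of a quadratic form is symmetry with respect to $\Phi_F$ rather than the naive flip. Once this is granted, the proof is essentially a one-line application of the Lemma together with the naturality identity $F(\Phi) \circ F_{WW} = F_{WW} \circ \Phi_F$ already established for the braiding.
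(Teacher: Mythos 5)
Your proposal is correct and takes the same route as the paper: the paper's entire proof is ``Take $M=N=W$ and $P={\Bbb C}$'' in the preceding Lemma, exactly your first paragraph. Your additional verification that $Q_F\circ\Phi_F=Q_F$ via the identity $F(\Phi)\circ F_{WW}=F_{WW}\circ\Phi_F$ is a correct and worthwhile supplement that the paper leaves implicit, and it correctly identifies braided symmetry as the appropriate notion in the target category.
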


\begin{proof}
Take $M=N=W$ and  $P = {\Bbb C}$.
\end{proof}

In general, the functor $F$ takes any structure which can be
defined in the category of ${\mathcal G}$ modules, to a similar
structure in the new monoidal category.

 The Clifford algebra ${\textrm{Cliff}}(W,Q)$ can be
defined as the universal unital complex algebra for maps $f$
from $W$ to an algebra ${\mathcal C}$ such that one has a
commutative diagram
\begin{center}
\begin{picture}(200,90)(10,-10)
 \put(40,70){\makebox(0,0){$W\otimes W$}}
 \put(60,70){\vector(1,0){95}}
 \put(20,35){\makebox(0,0){id$\,+\Phi$}}
 \put(100,75){\makebox(0,0){$2Q$}}
 \put(160,70){\makebox(0,0){${\Bbb C}$}}
 \put(40,60){\vector(0,-1){55}}
 \put(40,0){\makebox(0,0){$W\otimes W$}}
 \put(60,0){\vector(1,0){25}}
 \put(75,7){\makebox(0,0){$f\otimes f$}}
 \put(105,0){\makebox(0,0){${\mathcal C}\otimes{\mathcal C}$}}
 \put(125,0){\vector(1,0){25}}
 \put(160,0){\makebox(0,0){${\mathcal C}$}}
 \put(135,7){\makebox(0,0){$m$}}
 \put(160,60){\vector(0,-1){55}}
 \put(170,35){\makebox(0,0){$\times 1$}}
\end{picture}
\end{center}
and for any such algebra $f$ there is a morphism $f_*:{\rm
Cliff}(W,Q) \to {\mathcal C}$ whose composition with $W \to
{\rm Cliff}(W,Q)$ is $f$.

\begin{theorem}
The algebra $F({\textrm{Cliff}}(W,Q))$ is a universal object
for the corresponding diagrams in the braided category of
${\mathcal B}$-bimodules, and so can be regarded as a Clifford
algebra ${\textrm{Cliff}}(F(W),F(Q))$ in that category.
\end{theorem}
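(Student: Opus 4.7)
The plan is to transport the universal property of ${\textrm{Cliff}}(W,Q)$ across the strong monoidal functor $F$, in three stages.

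First I would install a ${\mathcal B}$-bimodule algebra structure on $F({\textrm{Cliff}}(W,Q))$. Since $Q$ is ${\mathcal G}$-invariant, the Clifford relations are ${\mathcal G}$-stable, so the multiplication $m : {\textrm{Cliff}}(W,Q) \otimes {\textrm{Cliff}}(W,Q) \to {\textrm{Cliff}}(W,Q)$ is a ${\mathcal G}$-morphism. The corollary about algebras above then promotes $F({\textrm{Cliff}}(W,Q))$ to a bimodule algebra with multiplication $F(m) \circ F_{{\textrm{Cliff}},{\textrm{Cliff}}}$. The canonical inclusion $\iota : W \hookrightarrow {\textrm{Cliff}}(W,Q)$ is also a ${\mathcal G}$-morphism, and $F$ sends it to the candidate generator $F(\iota) : F(W) \to F({\textrm{Cliff}}(W,Q))$.

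Second I would verify that applying $F$ to the defining commutative diagram for ${\textrm{Cliff}}(W,Q)$, taken with $f = \iota$ and ${\mathcal C} = {\textrm{Cliff}}(W,Q)$, yields the braided analogue in ${\mathcal B}$-bimodules. Functoriality preserves commutativity; the strong monoidal coherence isomorphisms $F_{WW}$ identify $F(W \otimes W)$ with $F(W) \otimes_{\mathcal B} F(W)$, and $F({\Bbb C})$ with ${\mathcal B}$. Under these identifications $F(2Q)$ becomes $2 Q_F$ by the corollary defining $Q_F$, the symmetriser $F({\textrm{id}} + \Phi)$ becomes ${\textrm{id}} + \Phi_F$ by the definition of $\Phi_F$ in the braiding theorem, and $F(m \circ (\iota \otimes \iota))$ becomes the new multiplication precomposed with $F(\iota) \otimes_{\mathcal B} F(\iota)$. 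Thus $F(\iota)$ satisfies the braided Clifford diagram.

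For universality I would realise ${\textrm{Cliff}}(W,Q)$, inside ${\mathcal G}$-modules, as the quotient of the tensor algebra $T(W) = \bigoplus_{n \geq 0} W^{\otimes n}$ by the two-sided ideal generated by $v \otimes w + w \otimes v - 2 Q(v,w) \cdot 1$. Because $F$ is strong monoidal and, being built by tensoring with ${\mathcal E}$ followed by taking invariants under the amenable group ${\mathcal G}$, preserves direct sums and quotients, it carries $T(W)$ to the braided tensor algebra $T_{\mathcal B}(F(W))$ and the Clifford ideal to the braided Clifford ideal generated by $v \otimes_{\mathcal B} w + \Phi_F(v \otimes_{\mathcal B} w) - 2 Q_F(v,w)$. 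Given any ${\mathcal B}$-bimodule algebra ${\mathcal D}$ and any bimodule morphism $g : F(W) \to {\mathcal D}$ satisfying the braided Clifford relations, the usual universal property of a tensor algebra modulo relations then produces a unique bimodule algebra morphism $g_* : F({\textrm{Cliff}}(W,Q)) \to {\mathcal D}$ extending $g$.

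The main obstacle is this last step, specifically the claim that $F$ commutes with the infinite direct sums and quotients needed to present the Clifford algebra as a relation-quotient of a tensor algebra. This rests on flatness of ${\mathcal E}$ together with exactness of the generalised fixed-point functor on ${\mathcal G}$-modules, both made available by the amenability of ${\mathcal G}$ already invoked. A cleaner but less explicit alternative would be to use a right adjoint to $F$ to convert $g$ into a ${\mathcal G}$-morphism, apply the universal property of ${\textrm{Cliff}}(W,Q)$ directly, and then re-apply $F$.
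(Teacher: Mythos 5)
The paper actually states this theorem without any proof at all, so you are not diverging from the author's argument --- you are supplying one, and the one you supply is the natural one: present ${\textrm{Cliff}}(W,Q)$ by generators and relations, and transport that presentation through the strong monoidal functor $F$. Your first two stages are correct and essentially just assemble the corollaries the paper has already established (the multiplication of ${\textrm{Cliff}}(W,Q)$ is a ${\mathcal G}$-morphism because ${\mathcal G}$ preserves $Q$, so $F$ produces a ${\mathcal B}$-bimodule algebra; strong monoidality plus the definitions of $Q_F$ and $\Phi_F$ turn the defining square into its braided analogue). The paper's implicit intent is precisely that the diagrammatic definition of the Clifford algebra is category-internal and hence preserved by a strong monoidal functor, which is what your argument makes explicit.

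The one step you rightly single out is the real content. Since ${\mathcal E}\otimes_{\Bbb C}(-)$ is exact over a field and the ${\mathcal G}$-action on ${\mathcal E}\otimes T(W)$ preserves each graded summand (so invariants commute with the algebraic direct sum automatically), the entire burden falls on right-exactness of the invariants functor $(-)^{\mathcal G}$ applied to the quotient by the Clifford ideal; this is exactly where the amenability of ${\mathcal G}$, invoked earlier in the paper to guarantee generalised fixed-point algebras, earns its keep, via the invariant-mean projection onto invariants. Two smaller cautions: first, the braiding $\Phi_F$ is only defined on objects of the form $F(M)\otimes_{\mathcal B}F(N)$, so the ``braided category'' in which the universal property is asserted should be read as the image subcategory of $F$ (a defect of the theorem statement rather than of your proof, but your universality argument should quantify over test algebras ${\mathcal D}$ accordingly, with the symmetriser ${\textrm{id}}+\Phi_F$ living on $F(W)\otimes_{\mathcal B}F(W)$ where it is defined). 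Second, your adjoint-functor alternative needs the right adjoint of $F$ to be lax monoidal in order to give $G({\mathcal D})$ an algebra structure before invoking the universal property upstairs; this holds by doctrinal adjunction, but it is not free, and in this infinite-dimensional setting the existence of a well-behaved right adjoint is at least as delicate as the exactness you are trying to avoid, so the direct averaging argument is the safer route.
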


This means that the algebra for interacting quantum
electrodynamics can be regarded as a Clifford algebra over the
gauge boson algebra, and inherits interesting features, coming
from the trace, and antilinear anti-automorphism \cite{PR}.

\begin{corollary}
The fermion Clifford algebra has a unique braided commutative
normalised conditional expectation $F(\tau): {\rm
Cliff}(F(W),F(Q)) \to {\mathcal B}$, which is ${\mathcal
B}$-linear.
\end{corollary}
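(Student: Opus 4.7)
The plan is to obtain each of the four properties of $F(\tau)$ (existence, normalisation, $\mathcal B$-linearity, braided commutativity) by applying the monoidal functor $F$ to the corresponding property of the trace $\tau$ in the category of $\mathcal G$-modules, and then to get uniqueness by transporting the universal property.

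First I would observe that $\tau:\mathrm{Cliff}(W,Q)\to\mathbb C$ is $\mathcal G$-invariant: $\mathcal G$ acts through automorphisms preserving $Q$, hence preserving the canonical normalised trace, so $\tau$ is a morphism in the category of $\mathcal G$-modules (with $\mathbb C$ the trivial module). Applying $F$ and using $F(\mathbb C)\cong{\mathcal B}$ from Example 1, this produces a morphism $F(\tau):F(\mathrm{Cliff}(W,Q))\to\mathcal B$. Since $F(\mathrm{Cliff}(W,Q))$ is the Clifford algebra $\mathrm{Cliff}(F(W),F(Q))$ in the braided category (by the preceding theorem), we have the required map. The $\mathcal B$-bilinearity is immediate from the general Lemma, which says every image $F(\phi)$ is a $\mathcal B$-bimodule intertwiner; in particular $F(\tau)$ commutes with left and right multiplication by $\mathcal B=F(\mathbb C)$.

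Next, normalisation: the unit $1_{\mathrm{Cliff}}$ is the image of $1\in\mathbb C$ under the unit map ${\Bbb C}\to \mathrm{Cliff}(W,Q)$, which is a $\mathcal G$-morphism, and $\tau(1)=1$ is the statement that the composition with $\tau$ is the identity of $\mathbb C$. Applying $F$ to this composition yields the identity on $F(\mathbb C)=\mathcal B$, so $F(\tau)$ sends the unit of $F(\mathrm{Cliff})$ to $1_{\mathcal B}$. For braided commutativity I would translate the trace property $\tau\circ m=\tau\circ m\circ\Phi$ (where $\Phi$ is the flip on $\mathrm{Cliff}\otimes\mathrm{Cliff}$) into the braided category. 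Because $F$ is strong monoidal, the multiplication on $F(\mathrm{Cliff})$ is $F(m)\circ F_{\mathrm{Cliff},\mathrm{Cliff}}$, and the braiding on $F(\mathrm{Cliff})\otimes_{\mathcal B}F(\mathrm{Cliff})$ is, by the previous theorem, $\Phi_F=F_{\mathrm{Cliff},\mathrm{Cliff}}^{-1}\circ F(\Phi)\circ F_{\mathrm{Cliff},\mathrm{Cliff}}$. Applying $F$ to the equation $\tau\circ m=\tau\circ m\circ\Phi$ and precomposing with $F_{\mathrm{Cliff},\mathrm{Cliff}}$ gives
\[
F(\tau)\circ m_{F(\mathrm{Cliff})}=F(\tau)\circ m_{F(\mathrm{Cliff})}\circ\Phi_F,
\]
which is exactly braided commutativity of $F(\tau)$ in the braided category of $\mathcal B$-bimodules.

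For uniqueness I would argue abstractly: $\tau$ is the unique normalised trace on $\mathrm{Cliff}(W,Q)$ characterised by these categorical properties in the $\mathcal G$-module category (the universal property of the Clifford algebra, together with Friedrichs positivity of the ${\mathcal B}$-valued inner product, forces it). Any other normalised, braided-commutative, $\mathcal B$-linear conditional expectation $F(\mathrm{Cliff})\to \mathcal B$ would, by essential surjectivity of $F$ onto its image and the fact that $F$ is faithful on morphisms between the relevant objects, pull back to such a map on $\mathrm{Cliff}(W,Q)$, which must agree with $\tau$. I expect the hard part to be the braided commutativity step: one has to be careful that the flip $\Phi$ on $\mathrm{Cliff}\otimes\mathrm{Cliff}$ really is sent by $F$ (plus the coherence isomorphisms $F_{MN}$) to the correct braiding $\Phi_F$ on $F(\mathrm{Cliff})\otimes_{\mathcal B}F(\mathrm{Cliff})$, and that the associativity and unit coherences of the monoidal functor $F$ assemble correctly so that $F(\tau)$ is not merely a trace on $F(m)(\mathrm{image})$ but on the full bimodule tensor product. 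Once that bookkeeping is dispatched, the remaining properties follow essentially automatically from functoriality.
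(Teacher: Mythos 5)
Your construction of $F(\tau)$ and the verification of its properties follow the paper's proof almost verbatim: the paper likewise obtains $F(\tau)$ by applying $F$ to the (unique, hence ${\mathcal G}$-invariant) normalised trace $\tau$, gets ${\mathcal B}$-linearity by applying $F$ to the commutative diagram expressing ${\Bbb C}$-linearity of $\tau$, and gets braided symmetry by applying $F$ to $\tau\circ{\textrm{mult}}\circ\Phi=\tau\circ{\textrm{mult}}$ and reading off $F(\tau)\circ{\textrm{mult}}\circ\Phi_F=F(\tau)\circ{\textrm{mult}}$ via the definition $\Phi_F=F_{NM}^{-1}\circ F(\Phi)\circ F_{MN}$. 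The paper also records the explicit formula $F(\tau)={\textrm{id}}\otimes\tau$, which you do not state but which makes the normalisation and linearity transparent.

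The one place where you genuinely diverge is uniqueness, and there your argument has a gap. You want to pull an arbitrary normalised braided-commutative ${\mathcal B}$-linear conditional expectation on $F({\rm Cliff}(W,Q))$ back to a trace on ${\rm Cliff}(W,Q)$, and for that you invoke ``essential surjectivity of $F$ onto its image and the fact that $F$ is faithful on morphisms between the relevant objects.'' What you actually need is \emph{fullness}: every ${\mathcal B}$-bimodule morphism $F({\rm Cliff}(W,Q))\to F({\Bbb C})={\mathcal B}$ must be of the form $F(f)$ for a ${\mathcal G}$-morphism $f$. Nothing in the paper establishes this, and it is not an automatic consequence of $F$ being a (strong) monoidal functor; your appeal to ``Friedrichs positivity of the ${\mathcal B}$-valued inner product'' is not a notion introduced anywhere in this setting and does not supply the missing fullness. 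The paper itself is terse here --- it simply transports the uniqueness of the scalar trace through $F$ --- so the honest conclusion is that your existence and structural claims are proved exactly as in the paper, while your uniqueness step asserts a property of $F$ that would require a separate argument.
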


\begin{proof}
There is a unique (and therefore ${\mathcal G}$-invariant)
normalised trace $\tau: {\textrm{Cliff}}(W,Q) \to {\Bbb C}$,
and this gives a map $F(\tau): {\textrm{Cliff}}(F(W),F(Q)) \to
{\mathcal B}$. Linearity of the original trace is expressed by
the commutativity of the diagram
\begin{center}
\begin{picture}(300,70)(20,-10)
 \put(50,50){\makebox(0,0){${\Bbb C}\otimes A$}}
 \put(70,50){\vector(1,0){60}}
 \put(100,55){\makebox(0,0){${\textrm{id}}\otimes\tau$}}
 \put(150,50){\makebox(0,0){${\Bbb C}\otimes {\Bbb C}$}}
 \put(50,45){\vector(0,-1){40}}
 \put(35,25){\makebox(0,0){mult}}
 \put(50,0){\makebox(0,0){$A$}}
 \put(70,0){\vector(1,0){60}}
 \put(150,0){\makebox(0,0){${\Bbb C}$}}
 \put(150,45){\vector(0,-1){40}}
 \put(165,25){\makebox(0,0){mult}}
 \put(110,-5){\makebox(0,0){$\tau$}}
 \put(250,50){\makebox(0,0){$A\otimes{\Bbb C}$}}
 \put(230,50){\vector(-1,0){60}}
 \put(250,45){\vector(0,-1){40}}
 \put(285,25){\makebox(0,0){mult}}
 \put(250,0){\makebox(0,0){$A$}}
 \put(230,0){\vector(-1,0){60}}
\end{picture}
\end{center}
and application of $F$ shows that $F(\tau)$ is ${\mathcal
B}$-linear (on both sides). Similarly the trace property
$\tau\circ{\textrm{mult}}\circ\Phi = \tau\circ{\textrm{mult}}$
gives $F(\tau)\circ {\textrm{mult}}\circ\Phi_F = F(\tau)\circ
{\textrm{mult}}$, showing that $F(\tau)$ is braided symmetric.
Overall $F(\tau)$ defines a braided symmetric conditional
expectation from the full QED algebra to its bosonic part.
Explicitly we have
 $F(\tau)= {\textrm{id}} \otimes \tau$.
\end{proof}

This shows that, within the new category of ${\mathcal
B}$-${\mathcal B}$-bimodules, the type III QED algebra inherits
some of the type II$_1$ properties of the original Clifford
algebra.


\section{Bar/monoidal dagger categories}

There is yet further structure in these categories. In order to
be able to talk about antilinear operations such as a
sesquilinear inner product on a Hilbert space or the
$*$-structure on a C$^*$-algebra it is useful to work in a bar
or monoidal dagger category \cite{BM,AC,S}, which were devised
for precisely this purpose.  Bar categories are slightly more
convenient for our purposes, as \cite{BM} already contains
several examples of interest, so we shall use them with a
change of notation.

A bar category has a functor from the category to its opposite,
so that an object {\sf bar}:$A\mapsto \overline{A}$, with (i) a
natural equivalence between the identity and {\sf
bar}$\circ${\sf bar} functors; (which we shall actually assume
a strong bar category, and identify $\overline{\overline{A}} =
A$) (ii) a natural morphism $U \mapsto \overline{U}$ from the
unit object (which we shall just write as an identification):
(iii) a natural equivalence $\overline{(A\otimes B)}\to
\overline{B}\otimes\overline{A}$ and consistency with the
associator morphisms. (We have abbreviated the conditions
somewhat, the full definition is in \cite{BM}.)


There is a natural functor on the category of ${\mathcal
G}$-modules which takes a module $M$ to its conjugate
$\overline{M}$ with the conjugate scalar multiplication by
${\Bbb C}$ and action of ${\mathcal G}$. A {\it star object}
$M$ is one where there is an isomorphism $M \to \overline{M}$.

A $*$-algebra is a star object with the isomorphism $\mu
\mapsto \mu^*$ from $M$ to $\overline{M}$.  In particular,
${\mathcal E}$ and the fermionic Clifford algebra are star
objects in the category of ${\mathcal G}$-modules. Moreover,
There is also a bar structure on the ${\mathcal B}$-bimodules,
\cite[Example 2.3]{BM} , and exploiting this with the braiding
this means that we have isomorphisms
$$
{\mathcal E}\otimes\overline{M} \to \overline{M}\otimes{\mathcal E}
\to \overline{M}\otimes\overline{{\mathcal E}} \to \overline{{\mathcal E}\otimes M},
$$
or $(\mu\otimes m)^* = \mu^*\otimes m^*$. From this it follows
that the fixed point sets agree $({\mathcal
E}\otimes\overline{M})^{\mathcal G} \cong \overline{({\mathcal
E}\otimes M)}^{\mathcal G}$, and $F(\overline{M}) \cong
\overline{F(M)}$.
 (This can be seen by regarding the fixed points in $N$ as
 labelling the ${\mathcal G}$-morphisms ${\Bbb C} \to N$. Applying
 ${\sf bar}$  one has ${\Bbb C} \cong \overline{{\Bbb C}} \to
 \overline{N}$, which labels the fixed points of $\overline{N}$.
 Strictly we should have mapped into the multiplier algebra,
 but that is defined by a universal property for algebras
 having $N$ as a two-sided ideal.)

Putting all this together proves the following theorem:

\begin{theorem}
The functor $F$ is a bar functor, that is $F(\overline{M})
\cong \overline{F(M)}$.
\end{theorem}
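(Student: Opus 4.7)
The plan is to unpack the statement into four steps that chain together the bar structure on $\mathcal{G}$-modules, the braiding, the $*$-structure on $\mathcal{E}$, and a compatibility of bar with the generalised fixed-point construction.

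First I would recall that $F(M) = (\mathcal{E}\otimes M)^{\mathcal{G}}$, and that by hypothesis $\mathcal{B}$-bimodules form a strong bar category (this is the cited Example 2.3 of \cite{BM}), so $\overline{F(M)}$ is already defined on the target side. The whole task is therefore to produce a natural $\mathcal{B}$-bimodule isomorphism $F(\overline{M}) \to \overline{F(M)}$.

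Second, I would build the required isomorphism at the level of $\mathcal{E}\otimes(-)$ by following the chain displayed in the excerpt,
$$
\mathcal{E}\otimes\overline{M} \;\xrightarrow{\Phi}\; \overline{M}\otimes\mathcal{E} \;\xrightarrow{\mathrm{id}\otimes *}\; \overline{M}\otimes\overline{\mathcal{E}} \;\xrightarrow{\sim}\; \overline{\mathcal{E}\otimes M},
$$
where $\Phi$ is the (trivial) braiding on $\mathcal{G}$-modules, the second arrow uses the star-object structure on $\mathcal{E}$ (which makes sense because $\mathcal{E}$ is a $*$-algebra, hence isomorphic as a $\mathcal{G}$-module to its conjugate), and the last arrow is the defining natural equivalence of the bar structure $\overline{A\otimes B}\cong\overline{B}\otimes\overline{A}$. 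Each of these three morphisms is $\mathcal{G}$-equivariant: the flip by definition, the second because $*$ intertwines the $\mathcal{G}$-action on $\mathcal{E}$ with its conjugate action (a restatement of $\alpha_\chi[\mu]^* = \alpha_\chi[\mu^*]$, which is automatic from $\alpha_\chi$ being a $*$-automorphism), and the third by naturality of the bar functor. The composite realises the formula $(\mu\otimes m)\mapsto \mu^*\otimes m^*$ given in the paragraph above the theorem.

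Third, I would take $\mathcal{G}$-fixed points on both sides. Because the whole chain is a natural $\mathcal{G}$-morphism, it restricts to an isomorphism $(\mathcal{E}\otimes\overline{M})^{\mathcal{G}} \cong (\overline{\mathcal{E}\otimes M})^{\mathcal{G}}$; the left-hand side is $F(\overline{M})$ by definition. Fourth, and this is the step I expect to be the main obstacle, I would identify $(\overline{\mathcal{E}\otimes M})^{\mathcal{G}}$ with $\overline{(\mathcal{E}\otimes M)^{\mathcal{G}}} = \overline{F(M)}$. Following the hint in the excerpt, I would describe fixed points in any $\mathcal{G}$-module $N$ via the functor $\mathrm{Hom}_{\mathcal{G}}(\mathbb{C},-)$; applying the bar functor and using $\overline{\mathbb{C}}\cong\mathbb{C}$ then matches fixed points of $N$ with fixed points of $\overline{N}$ bijectively. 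The delicacy here is that $F(M)$ is really constructed inside the multiplier algebra (generalised fixed-point subspace), so this identification must be phrased via the universal property making $F(M)$ a two-sided ideal pattern, exactly as flagged parenthetically just before the theorem. Once that identification is in hand, the composite of all four steps is the required $\mathcal{B}$-bimodule isomorphism, and naturality in $M$ plus compatibility with the unit and with the bar-associator follow from the naturality of each constituent morphism, completing the proof that $F$ is a bar functor.
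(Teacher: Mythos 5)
Your proposal is correct and follows essentially the same route as the paper: the chain $\mathcal{E}\otimes\overline{M}\to\overline{M}\otimes\mathcal{E}\to\overline{M}\otimes\overline{\mathcal{E}}\to\overline{\mathcal{E}\otimes M}$ built from the braiding, the star-object structure on $\mathcal{E}$, and the bar-category equivalence, followed by passage to $\mathcal{G}$-fixed points and the identification of $(\overline{N})^{\mathcal{G}}$ with $\overline{N^{\mathcal{G}}}$ via $\mathcal{G}$-morphisms from $\mathbb{C}$, with the same multiplier-algebra caveat. Your explicit verification that each arrow is $\mathcal{G}$-equivariant is a useful elaboration but not a different argument.
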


Since $\tau[m^* m] \geq 0$ for the map $m \mapsto m^*$ on ${\rm
Cliff}(W,Q) = \overline{{\textrm{Cliff}}(W,Q)}$, and
$$
F(\tau)[(\mu\otimes m)^*(\mu\otimes m)]
= (\mu^*\mu)\tau[m^*m],
$$
the trace $F(\tau)$ is positive, This means that we can form
states $\phi\circ F(\tau)$ of $F({\textrm{Cliff}})$ by
composing states $\phi$ of ${\mathcal B}$ with $F(\tau)$.
Rieffel's imprimitivity theorem gives necessary and sufficient
conditions for a given state to be of this form.

We note that whenever there is an inner product with values in
the $*$-algebra ${\mathcal C}$,  $\overline{M}\otimes M \to
{\mathcal C}$ we have a map $\overline{F(M)}\otimes F(M) \to
F({\mathcal C})$. This gives an $F({\mathcal C})$-valued inner
product on $F(M)$, provided that it is positive.


 \section{Carey's Theorem}

The link between complex structures and fermionic vacuum states
mentioned in Section 2 was clarified by Shale and Stinespring
in\cite{ShS}  where they found necessary and sufficient
conditions for two complex structures to define the equivalent
representations of the Clifford algebra. An orthogonal
transformation $T$ of the underlying inner product space $W$
gives rise to a Bogoliubov automorphism of the Clifford
algebra: $\Psi(\xi) \mapsto \Psi(T\xi)$, and the
Shale--Stinespring Theorem also gave a criterion for this to be
implemented by a unitary transformation of the representation
space. In \cite{ALC2} Carey generalised this (and work of
Blattner) to cover quasifree representations of the Clifford
algebra which are completely determined by their two-point
correlation functions. Robinson, \cite{R}, recast the standard
doubling construction used by Carey to show that every
quasi-free representation of a Clifford algebra is obtained as
a restriction of the regular representations on the Hilbert
space
$$
{\mathcal H}_\tau = \{x\in {\textrm{Cliff}}(W,Q)-{\Bbb C}: \tau(x^*x) <\infty\}.
$$

Writing $\lambda$ and $\rho$ for the left and right regular
representations, and letting $\Gamma$ be the implementor of the
orthogonal transformation $-1$, we define:
$$
\varpi(x\oplus y) = \lambda(x) + i\rho(y)\Gamma
$$
to obtain a Fock representation of ${\textrm{Cliff}}(W\oplus W,
Q\oplus Q)$ on ${\mathcal H}_\tau$ with vacuum vector the unit,
1. (Furthermore, the Tomita antiunitary operator coincides with
the canonical conjugation on the Clifford algebra.)

Robinson's construction used only the trace and *, from
Clifford algebra theory, and those natural ingredients are now
available for the generalised Clifford algebra ${\rm
Cliff}(F(W),F(Q))$ too. We hope to discuss their application to
Carey's Theorem in a subsequent paper.


 \section{Conclusions}

The main conclusion of this paper is that algebra of fermions
and gauge bosons can be regarded as a braided Clifford algebra
${\textrm{Cliff}}(F(W),Q_F)$ over the braided commutative
bosonic algebra ${\mathcal B}$, and that it shares many
features with ordinary complex Clifford algebras. This enables
one to incorporate the minimally coupled bosons as well as
fermions, whilst retaining  the spirit of the treatment of free
fermions or those in classical external gauge fields, in, for
example, \cite{ALC2, ALC3, SW, W}. Nonetheless there are
serious differences as soon as one studies the interacting
fermion-boson dynamics.  In a future paper we shall discuss
applications of these ideas to interacting QED.


\end{document}